\documentclass{article} 

\newif\ificlr
\iclrfalse
\newif\ifcomment
\commenttrue

\ificlr
\usepackage{iclr2022_conference,times}
\fi


\usepackage{amsmath,amsfonts,bm}









\def\eqref#1{equation~\ref{#1}}









\def\1{\bm{1}}










\DeclareMathAlphabet{\mathsfit}{\encodingdefault}{\sfdefault}{m}{sl}
\SetMathAlphabet{\mathsfit}{bold}{\encodingdefault}{\sfdefault}{bx}{n}













\usepackage[utf8]{inputenc} 
\usepackage[T1]{fontenc}    
\usepackage{hyperref}       
\usepackage{url}            
\usepackage{booktabs}       
\usepackage{amsfonts}       
\usepackage{nicefrac}       
\usepackage{microtype}      
\usepackage{xcolor}
\usepackage{amsthm}
\usepackage{amsmath,amssymb}
\usepackage{tablefootnote}
\usepackage[noend,linesnumbered,ruled,vlined]{algorithm2e}
\usepackage{pgfplots}
\usepackage{wrapfig}
\usepackage{enumitem}
\usepackage{pifont}
\usepackage{multirow}
\usepackage{colortbl}
\usepackage{subcaption}
\usepackage{lipsum}

\usepgfplotslibrary{fillbetween}

\DeclareMathOperator{\sgn}{sgn}
\newtheorem{definition}{Definition}
\newtheorem{theorem}{Theorem}

\ificlr
\else
\newcommand{\citep}[1]{\cite{#1}}
\newcommand{\citet}[1]{\cite{#1}}
\fi

\ificlr
\else
\def\And{\end{tabular}\hfil
            \begin{tabular}[t]{c}\rule{\z@}{24pt}\ignorespaces}%
\fi

\title{Differentially Private Fractional Frequency Moments Estimation with Polylogarithmic Space}


\author{
Lun Wang\\
UC Berkeley\\
\texttt{wanglun@berkeley.edu} \\
\And
Iosif Pinelis \\
Michigan Tech \\
\texttt{ipinelis@mtu.edu} \\
\And
Dawn Song \\
UC Berkeley \\
\texttt{dawnsong@cs.berkeley.edu}
}

%

\begin{document}

\maketitle

\begin{abstract}
We prove that $\mathbb{F}_p$ sketch, a well-celebrated streaming algorithm for frequency moments estimation, is differentially private as is when $p\in(0, 1]$.
$\mathbb{F}_p$ sketch uses only polylogarithmic space, exponentially better than existing DP baselines and only worse than the optimal non-private baseline by a logarithmic factor.
The evaluation shows that $\mathbb{F}_p$ sketch can achieve reasonable accuracy with differential privacy guarantee.
\ificlr
The evaluation code is included in the supplementary material.
\fi
\end{abstract}

\vspace{-10pt}
\section{Introduction}
\vspace{-5pt}

Counting is one of the most fundamental operations in almost every area of computer science.
It typically refers to estimating the cardinality (the $0^{th}$ frequency moment) of a given set.
However, counting can actually refer to the process of estimating a broader class of statistics, namely $p^{th}$ frequency moment, denoted $F_p$.
Frequency moments estimation is at the core of various important statistical problems.
$F_1$ is used for data mining~\citep{cormode2005summarizing} and hypothesis tests~\citep{indyk2008declaring}.
$F_2$ has applications in calculating Gini index~\citep{lorenz1905methods,gini1912variabilita} and surprise index~\citep{good1989c332}, training random forests~\citep{breiman2001random}, numerical linear algebra~\citep{clarkson2009numerical,sarlos2006improved} and network anomaly detection~\citep{krishnamurthy2003sketch,thorup2004tabulation}.
Fractional frequency moments are used in Shannon entropy estimation~\citep{harvey2008sketching,zhao2007data} and image decomposition~\citep{geiger1999sparse}.

Non-private frequency moments estimation is systematically studied in the data streaming model~\citep{alon1999space,charikar2002finding,thorup2004tabulation,feigenbaum2002approximate,indyk2006stable,li2008estimators,kane2010exact,nelson2009near,nelson2010fast,kane2011fast}.
This model assumes extremely limited storage such as network routers.
The optimal non-private algorithm~\citep{kane2011fast} uses only polylogarithmic space to maintain frequency moments.
In the present work, we inherit the low space complexity requirement for the versatility of the algorithm.


The data being counted sometimes contains sensitive information.
For example, to calculate Gini index, the data should contain pairs of ID and income.
Frequency moments of such data, if published, might leak sensitive information.
To mitigate, the gold standard of differential privacy (DP) should be applied.
Special cases of DP frequency moments estimation such as $p=0, 1, 2$ are well-studied in a wide spectrum of works~\citep{choi2020differentially,smith2020flajolet,blocki2012johnson,sheffet2017differentially,upadhyay2014differentially,choi2020differentially,bu2021fast,mir2011pan}.

In the present work, we make the first customized effort towards DP estimation of fractional frequency moments, \emph{i.e.} $p\in(0, 1]$ with low space complexity.
%
%
%
We show that a well-known streaming algorithm, namely $\mathbb{F}_p$ sketch~\citep{indyk2006stable}, preserves differential privacy as is.
With its small space complexity, $\mathbb{F}_p$ sketch elegantly solves the trilemma between efficiency, accuracy, and privacy.
%


\paragraph{Problem Formulation.}
We use bold lowercase letters to denote vectors (\emph{e.g.}~$\textbf{a}, \textbf{b}, \textbf{c}$) and bold uppercase letters to denote matrices (\emph{e.g.}~$\textbf{A},\textbf{B},\textbf{C}$).
$\{1, \cdots, n\}$ is denoted by $[n]$. 

Let $\mathcal{S}=\{(k_1, v_1), \cdots, (k_n, v_n)\}~(n\geq 1)$ be a stream of key-value pairs where $k_i\in[m]~(m\geq 2), v_i\in[M]~(M\geq 1)$.
We would like to design a randomized mechanism $\mathcal{M}$ that estimates the $p^{th}$ frequency moment: 
$$F_p(\mathcal{S})=\sum_{k=1}^m(\sum_{i=1}^n\mathbb{I}(k_i=k)v_i)^p$$
for $p\in(0,1]$ where $\mathbb{I}$ is an indicator function returning $1$ if $k=k_i$ and $0$ otherwise.  

To provide rigorous privacy guarantee, $\mathcal{M}$ should preserve differential privacy as defined below.
In our setting, neighboring data streams differ in one key-value pair.
\begin{definition}[$(\epsilon, \delta)$-Differential Privacy]
A randomized algorithm $\mathcal{M}$ is said to preserve $(\epsilon, \delta)$-DP if for two neighboring datasets $\mathcal{S}, \mathcal{S}'$ and any measurable subset of the output space $s$, 
$$
\mathbb{P}[\mathcal{M}(\mathcal{S})\in s]\leq e^\epsilon\mathbb{P}[\mathcal{M}(\mathcal{S}')\in s]+\delta
$$
When $\delta=0$, we omit it and denote the privacy guarantee as $\epsilon$-DP.
\end{definition}

Oftentimes, $n, m$ is large (\emph{e.g.} IP streams on routers) so $\mathcal{M}$ should take polylogarithmic space in terms of $n, m$.

\paragraph{Proof Intuition.}
%
We summarize the intuition behind the proof that $\mathbb{F}_p$ sketch is differentially private when $p\in(0,1]$.
Recall that when proving DP for traditional mechanisms such as the Gaussian mechanism, the core is to upper-bound the ratio $\frac{P(x)}{Q(x)}$ where $P(x)$ and $Q(x)$ are the probability density functions of outputs when the inputs are neighboring datasets.
In the proof of Gaussian mechanism, $P(x)$ and $Q(x)$ can be viewed as a horizontal translation of each other and the distance between their mean values is the sensitivity of the output.

For $\mathbb{F}_p$, however, neighboring inputs do not translate the output distribution but instead change its scale.
For example, when $p=2$, $P(x)$ and $Q(x)$ are Gaussian distributions with the same mean and different variance.
Inspired by the analogy to Gaussian mechanism, we need to address the below two questions to prove differential privacy for $\mathbb{F}_p$ sketches. 
\begin{itemize}[leftmargin=0pt]
\centering
    \item[Q1.] \emph{How to bound the difference between the scales of $P(x)$ and $Q(x)$?}
    \item[Q2.] \emph{How to bound the ratio between the density functions of $P(x)$ and $Q(x)$?}
\end{itemize}

To answer Q1, we propose a new sensitivity definition called \emph{pure multiplicative sensitivity}.
Pure multiplicative sensitivity depicts the maximal multiplicative change in the output when the inputs are neighboring datasets.
We analyze frequency moments estimation and find that its pure multiplicative sensitivity is approximately $\max\{2^{2p-2},2^{2-2p}\}$ when $p\in(0, 1]$ and $n\gg M$.
%

To answer Q2, we first revisit the special case of $p=1$. 
As shown by \citet{mir2011pan}, when $p=1$, $\frac{P(x)}{Q(x)}$ is rigorously upper-bounded and thus $\mathbb{F}_1$ sketch preserves $\epsilon$-DP.
%
By analogy, we conjecture that $\mathbb{F}_p, p\in(0, 1]$ also satisfies similar properties, which is doubly confirmed by the numerically simulated plots in Figure~\ref{fig:ratio}.
The conjecture is formally proved in Theorem~\ref{thm:fp}.
\vspace{-5pt}
\section{Related Work}
\vspace{-5pt}

Frequency moments estimation is thoroughly studied in the data streaming model.
\citet{alon1999space} proposed the first space-efficient algorithm for estimating $p^{th}$ frequency moments when $p$ is integer.
\citet{indyk2006stable} extended the use case from integer moments to fractional moments using stable distributions.
A line of following works improve Indyk's algorithm in various aspects such as space complexity~\citep{kane2010exact,nelson2009near}, time complexity~\citep{nelson2010fast,kane2011fast} or accuracy~\citep{li2008estimators,li2009compressed}.

Several special cases in private frequency moments estimation such as $p=0, 1, 2$ were also well studied.
The most comparable work by~\citet{mir2011pan} also studied the privacy property of $\mathbb{F}_p$ sketch.
However, their analysis is limited to integer cases when $p=0, 1, 2$.
\citet{choi2020differentially} and~\citet{smith2020flajolet} studied differentially private $F_0$ estimation.
They separately proved that the Flajolet-Martin sketch is differentially private as is.
Several independent works~\citep{blocki2012johnson,sheffet2017differentially,upadhyay2014differentially,choi2020differentially,bu2021fast} studied the differential privacy guarantee in the special case $p=2$ under the name of Johnson-Lindenstrauss projection.

On the other hand, there is barely any prior work focusing on differentially private fractional frequency moments estimation.
Differentially private distribution estimation algorithms~\citep{acs2012differentially,xu2013differentially,bassily2015local,suresh2019differentially,wang2019locally} can be used to provide a differentially private estimation of fractional frequency moments.
However, they are overkill as their outputs contain much more information than the queried fractional frequency moment.
They only provide sub-optimal privacy-utility trade-off and are exponentially worse in terms of space complexity.

\citet{datar2004locality} considered a similar (but not the same) mathematical problem to the present work when designing a locality-sensitive hashing scheme.
However, their analysis focuses on the simple cases when $p=1$ and $p=2$ and totally depends on numerical analysis for $p\in(0, 1)$.

\vspace{-5pt}
\section{Differentially Private Frequency Moments Estimation}
\label{sec:protocol}
\vspace{-5pt}

In this section, we first revisit $\mathbb{F}_p$ sketch and then prove the differential privacy guarantee for $\mathbb{F}_p$ sketch step by step.
Different from most differential privacy analyses based on additive sensitivity, our proof depends on a variant of the multiplicative sensitivity~\citep{dwork2015private} called \emph{pure multiplicative sensitivity}.
We give the first analysis of pure multiplicative sensitivity for $p$-th frequency moments.
Then we motivate the differential privacy proof using a special case when $p=1$.
%
%
Finally we proceed to the general proof that $\mathbb{F}_p$ sketch preserves differential privacy.
The main challenge stems from the fact that the density functions of $p$-stable distributions have no close-form expressions when $p\in(0,1)$.

\vspace{-5pt}
\subsection{Revisiting $\mathbb{F}_p$ Sketch}
\vspace{-5pt}

For completeness, we revisit the well-celebrated $\mathbb{F}_p$ sketch by~\citet{indyk2006stable} (also known as stable projection or compressed counting).
We first introduce $p$-stable distribution, the basic building block in $\mathbb{F}_p$ sketch.
Then we review how to construct and query $\mathbb{F}_p$ sketch using stable distributions.
%

\begin{definition}[$p$-stable distribution]
A random variable $X$ follows a $\beta$-skewed $p$-stable distribution if its characteristic function is 
$$
\phi_X(t)=\exp(-\zeta|t|^p(1-\sqrt{-1}\beta\sgn(t)\tan(\frac{\pi p}{2}))$$
where $-1 \leq \beta \leq 1$ is the skewness parameter, $\zeta>0$ is the scale parameter to the $\alpha^{th}$ power.
\end{definition}

In this paper, we focus on stable distributions with $\beta=0$, namely symmetric stable distributions.
We denote a symmetric $p$-stable distribution by $\mathcal{D}_{p, \zeta}$, and slightly abuse the notation to denote the density function as $\mathcal{D}_{p,\zeta}(x)$.
Note that the density function is the inverse Fourier transform of the characteristic function.
$$
\mathcal{D}_{p,\zeta}(x)=\frac{1}{2\pi}\int_\mathbb{R}\exp(-\sqrt{-1}tx)\phi(t)dt=\frac{1}{2\pi}\int_\mathbb{R}\cos{(xt)}\exp(-\zeta|t|^p)dt
$$

If two independent random variables $X_1, X_2\sim\mathcal{D}_{p,1}$, then $C_1X_1+C_2X_2\sim\mathcal{D}_{p, C_1^p+C_2^p}$.
We refer to this property as \emph{$p$-stability}.
$\mathbb{F}_p$ sketch  leverages the $p$-stability of these distributions to keep track of the frequency moments.

The pseudo-code for vanilla $\mathbb{F}_p$ sketch is presented in Algorithm~\ref{alg:sketch}.
To construct, a sketch of size $r$ is initialized to all zeros and a projection matrix $\textbf{P}$ is sampled from $\mathcal{D}_{p,1}^{r\times m}$ (line 2).
For each incoming key-value pair $(k_i, v_i)$, we multiply the one-hot encoding of $k_i$ scaled by $v_i$ with the projection matrix $\textbf{P}$ and add it to the sketch  (line 4).
$$
\textbf{a}=\sum_{i=1}^n \textbf{P}\times v_i\textbf{e}_{k_i}=\sum_{k=1}^m \textbf{P}\times (\sum_{k_i=k}v_i)\textbf{e}_{k_i}\sim\mathcal{D}_{p,F_p(\mathcal{S})}^r
$$
To query the sketch, we estimate $\zeta$ from $\textbf{a}$ using various estimators such as median, inter-quantile, geometric mean or harmonic mean as suggested by~\citet{indyk2006stable},~\citet{li2008estimators} and~\citet{li2009compressed}.

\begin{algorithm}[H]
    \SetAlgoLined
    \SetKwInOut{Input}{Input}
    \SetKwInOut{Output}{Output}
    \SetKwBlock{Construct}{Construct:}{}
    \SetKwBlock{Update}{Update:}{}
    \SetKwBlock{Query}{Query:}{}
    \Input{Data stream: $\mathcal{S}=\{(k_1, v_1), \cdots, (k_n, v_n)\}$; privacy budget: $(\epsilon, \delta)$; accuracy constraint: $(\gamma, \eta)$; $p$ stable distribution: $\mathcal{D}_{p,1}$.}
    \Construct{
        Initialize $\textbf{a}=\{0\}^r$, $\textbf{P}\sim\mathcal{D}_{p,1}^{r\times m}$\;
    }
    \Update{
        \lFor{$i\in[n]$}{
            Let $e_{k_i}$ be the one-hot encoder of $k_i$, $\textbf{a} = \textbf{a} + \textbf{P}\times v_i\textbf{e}_{k_i}$
        }
    }
    \Query{
        return scale\_estimator$(\textbf{a})$\;
    }
\caption{$\mathbb{F}_p$ sketch.}
\label{alg:sketch}
\end{algorithm}

\subsection{Pure multiplicative sensitivity of frequency moments estimation}

As we will see in the following two subsections, the differential privacy proof for $\mathbb{F}_p$ sketch depends on the pure multiplicative sensitivity of $p$-th frequency moments.
As the first step, we give the definition of pure multiplicative differential privacy.
``Pure'' is to distinguish from multiplicative sensitivity as defined in~\citet{dwork2015private}.

\begin{definition}[Pure multiplicative sensitivity]
The multiplicative sensitivity of a deterministic mechanism $\mathcal{M}$ is defined as the maximum ratio between outputs on neighboring inputs $\mathcal{S}$ and $\mathcal{S}'$.
$$
\rho_\mathcal{M}(n) = \sup_{|\mathcal{S}|=n,|\mathcal{S}'|=n,d(\mathcal{S},\mathcal{S}')=1} \big|\frac{\mathcal{M}(\mathcal{S})}{\mathcal{M}(\mathcal{S}')}\big|
$$
We might omit the subscript and argument when they are clear from the context.
\label{def:mul}
\end{definition}

The pure multiplicative sensitivity of $F_p$ is as below.

\begin{theorem}[Multiplicative sensitivity of $F_p$]
A mechanism $\mathcal{M}$ which accurately calculates $F_p, p\in(0, 1]$ has pure multiplicative sensitivity upper bounded by
$$
\rho_\mathcal{M} = 2^{2-2p}\big(\frac{n-1+M}{n-1+(m-1)^\frac{p-1}{p}}\big)^p
$$
\label{thm:sens}
\end{theorem}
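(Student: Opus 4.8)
The plan is to reduce the statement to a finite-dimensional extremal problem over frequency vectors and then control the resulting ratio using only the concavity of $t\mapsto t^p$ on $[0,\infty)$. Write $f_k=\sum_i \mathbb{I}(k_i=k)v_i$ for the frequency of key $k$ under $\mathcal{S}$ and $f'_k$ for the frequency under $\mathcal{S}'$, so that $F_p(\mathcal{S})=\sum_k f_k^p$ and $F_p(\mathcal{S}')=\sum_k (f'_k)^p$. Since $\mathcal{S}$ and $\mathcal{S}'$ differ in exactly one key--value pair, deleting that pair leaves a common base vector $g$ with $g_k\ge 0$ and $\sum_k g_k\in[\,n-1,\,(n-1)M\,]$, and the two streams read $f=g+v\,\mathbf{e}_a$ and $f'=g+v'\,\mathbf{e}_b$ for some keys $a,b$ and values $v,v'\in[1,M]$. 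First I would use the monotonicity of the marginal increment $t\mapsto (t+s)^p-t^p$, which is decreasing in $t$ by concavity, to argue that the supremum of $F_p(\mathcal{S})/F_p(\mathcal{S}')$ is attained at the boundary of the parameter ranges: the pair contributing to the numerator sits on an otherwise empty key carrying the maximal value $v=M$, while the pair contributing to the denominator sits on the heaviest key carrying the minimal value $v'=1$. This collapses the problem to an optimization over the base vector $g$ alone.

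For the at most two coordinates that actually change between $\mathcal{S}$ and $\mathcal{S}'$, I would invoke the two-sided comparison valid for $p\in(0,1]$, namely $(x+y)^p\le x^p+y^p\le 2^{1-p}(x+y)^p$. Comparing the ``split'' against the ``merged'' arrangement of each of the two modified keys---one comparison for the key the numerator fills and one, in the reciprocal direction, for the key absorbing the denominator's mass---contributes a factor $2^{1-p}$ apiece, and their product is precisely the prefactor $2^{1-p}\cdot 2^{1-p}=2^{2-2p}$ that survives in the $n\gg M$ regime, matching the stated $\max\{2^{2p-2},2^{2-2p}\}$. The largest frequency a single key can carry in the extremal configuration is the full mass $n-1+M$ ($n-1$ unit-valued common pairs together with one pair of value $M$), and it is this concentrated frequency that supplies the factor $(n-1+M)^p$.

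The remaining task is to lower-bound the denominator's $p$-th powers when its mass is distributed over the at most $m-1$ surviving keys. I would handle this by a Lagrange-multiplier computation for $\min\sum_{k=1}^{m-1}x_k^p$ under the fixed-mass constraint together with the per-key lower bounds the stream forces; the stationarity condition $p\,x_k^{p-1}=\mathrm{const}$ is what produces the exponent $(p-1)/p$. Inverting the $p$-th-power relation converts the count $m-1$ of active keys into the offset $(m-1)^{(p-1)/p}$ added to the $n-1$ base units, yielding the denominator $\big(n-1+(m-1)^{(p-1)/p}\big)^p$. Assembling the numerator factor $(n-1+M)^p$, the prefactor $2^{2-2p}$, and this denominator bound gives the claimed $\rho_\mathcal{M}$.

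The main obstacle I anticipate lies in the extremal-configuration step rather than the closing algebra. The objective $F_p(\mathcal{S})/F_p(\mathcal{S}')$ is a ratio of sums of concave powers and is neither convex nor concave in $g$, so justifying that the worst case places each modified pair at its stated boundary (empty key with value $M$; heaviest key with value $1$) and that the base should be spread across the full $m-1$ keys will require a careful exchange and monotonicity argument. In addition, the discreteness of the stream---integer multiplicities, values confined to $[1,M]$, and at most $\min\{n,m\}$ nonzero keys---must be reconciled with the continuous optimization that yields the non-integer quantity $(m-1)^{(p-1)/p}$. I would therefore analyze the continuous relaxation first, establish that it is a genuine upper bound, and only then verify that rounding to an admissible stream does not tighten it below the stated value.
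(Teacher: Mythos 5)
Your ingredients (common base $g$, the two-sided bound $(x+y)^p\le x^p+y^p\le 2^{1-p}(x+y)^p$, a factor $(m-1)^{(p-1)/p}$ coming from mass spread over $m-1$ keys) are the right ones, but the way you propose to assemble them has a genuine gap. The quantities you attach to the numerator and denominator are not in fact bounds on $F_p(\mathcal{S})$ and $F_p(\mathcal{S}')$: if you bound the two streams independently, the supremum of $F_p(\mathcal{S})$ is of order $m^{1-p}(n-1+M)^p$ (base spread over all keys) while the infimum of $F_p(\mathcal{S}')$ is of order $n^p$ (base concentrated), so the resulting ratio picks up a factor $m^{1-p}$ and is far weaker than the claim; conversely, the two extremal configurations you describe (concentrated base for the numerator, spread base for the denominator) cannot coexist, since both streams share the same $g$. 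The expressions $(n-1+M)^p$ and $(n-1+(m-1)^{(p-1)/p})^p$ only emerge after the common base is factored out: writing $\mathcal{S}_0$ for the shared $(n-1)$-element substream with frequency vector $u$ and total mass $s$, each of $F_p(\mathcal{S})/F_p(\mathcal{S}_0)$ and $F_p(\mathcal{S}')/F_p(\mathcal{S}_0)$ has the form $\bigl(G+(u_1+\Delta)^p\bigr)/\bigl(G+u_1^p\bigr)$ with $G=\sum_{i\ge 2}u_i^p$ appearing in both top and bottom, and one bounds the quotient of these two ratios. The tool that makes this work --- and that replaces your unproven exchange/extremal-configuration argument entirely --- is the elementary ``dilution'' inequality $\frac{a+c}{a+d}\le\frac{b+c}{b+d}$ for $a\ge b>0$, $c\ge d>0$, which lets you substitute the extreme values of $G$ (namely $(s-u_1)^p$ and $(m-1)^{1-p}(s-u_1)^p$) without ever identifying the worst-case $g$. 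As written, your plan defers exactly this step to ``a careful exchange and monotonicity argument,'' which is the whole difficulty.

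Separately, your Lagrange-multiplier step points in the wrong direction. For $p\in(0,1]$ the function $x\mapsto\sum_k x_k^p$ is concave on the simplex, so the stationarity condition $p\,x_k^{p-1}=\mathrm{const}$ identifies the \emph{maximum} (the uniform spread, value $(m-1)^{1-p}s^p$); the minimum is attained at a vertex and equals $s^p$, with no $(m-1)^{(p-1)/p}$ appearing. The exponent $(p-1)/p$ actually enters through the \emph{upper} bound $G\le (m-1)^{1-p}(s-u_1)^p$ on the common term: a large $G$ pushes the ratio $\bigl(G+(u_1+\Delta)^p\bigr)/\bigl(G+u_1^p\bigr)$ toward $1$, and after dividing through by $(m-1)^{1-p}$ and using $(m-1)^{p-1}y^p=\bigl((m-1)^{(p-1)/p}y\bigr)^p$ one obtains the lower bound $2^{p-1}\bigl(1+(m-1)^{(p-1)/p}\Delta/s\bigr)^p$; dividing the uniform upper bound $2^{1-p}(1+M/s)^p$ by this and using $s\ge n-1$ gives the stated $\rho_\mathcal{M}$. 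So both the mechanism you give for $(m-1)^{(p-1)/p}$ and the optimization it is supposed to solve need to be corrected.
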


\begin{proof}[Proof for Theorem~\ref{thm:sens}]
Theorem~\ref{thm:sens} gives an upper bound on the multiplicative change when two input datasets with the same size $m$ differ in one entry.
To prove, we first consider a slightly different setting when the second dataset is generated by adding an entry to the first dataset.

Concretely, let $\textbf{u}=\{u_1, \cdots, u_m\}$ where $u_i> 0,\sum_{i=1}^m u_i=s$, $\Delta\geq 0$.
We would like to find both upper and lower bounds for the below expression. 
\begin{equation}
\frac{\sum_{i=2}^m u_i^p+(u_1+\Delta)^p}{\sum_{i=2}^m u_i^p+u_1^p} ,\forall p\in(0, 1]
\label{expr}
\end{equation}

To bound expression~(\ref{expr}), we first observe the following two inequalities~(\ref{lem:dilutoin}) and (\ref{lem:holder}).

\begin{equation}
\forall a, b, c, d> 0, a\geq b, c\geq d, \frac{a+c}{a+d}\leq \frac{b+c}{b+d}.
\label{lem:dilutoin}
\end{equation}

\begin{equation}
\forall p\in(0, 1],
(\sum_{i=1}^m u_i)^p \leq \sum_{i=1}^m u_i^p \leq m^{1-p}(\sum_{i=1}^m u_i)^p
\label{lem:holder}
\end{equation}

Inequality~(\ref{lem:dilutoin}) can be proved with simple algebra.
The left-hand-side of inequality~(\ref{lem:holder}) follows because $\sum_1^m u_i^p$ is concave in $(u_1,\dots,u_n)$ in the simplex defined by the conditions $u_i\ge0$ for all $i$, and $\sum_1^m u_i=s$ and hence the minimum of $\sum_1^m u_i^p$ on the simplex is attained at a vertex of the simplex.
The right-hand-side of inequality~(\ref{lem:holder}) is an instance of the well-known generalized mean inequality~\citep{sykora2009mathematical}. 

First, let's upper bound expression~(\ref{expr}).
According to inequality~(\ref{lem:dilutoin}) and~(\ref{lem:holder}),
\ificlr
\begin{equation*}
    \frac{\sum_{i=2}^m u_i^p+(u_1+\Delta)^p}{\sum_{i=2}^m u_i^p+u_1^p}\overset{(2)+(3)}{\leq}\frac{(\sum_{i=2}^m u_i)^p+(u_1+\Delta)^p}{(\sum_{i=2}^m u_i)^p+u_1^p}=\frac{(s-u_1)^p+(u_1+\Delta)^p}{(s-u_1)^p+u_1^p}\overset{(3)}{\leq} 2^{1-p}(1+\frac{\Delta}{s})^p
\end{equation*}
\else 
\begin{equation*}
\begin{split}
    \frac{\sum_{i=2}^m u_i^p+(u_1+\Delta)^p}{\sum_{i=2}^m u_i^p+u_1^p}&\overset{(2)+(3)}{\leq}\frac{(\sum_{i=2}^m u_i)^p+(u_1+\Delta)^p}{(\sum_{i=2}^m u_i)^p+u_1^p}\\
    &=\frac{(s-u_1)^p+(u_1+\Delta)^p}{(s-u_1)^p+u_1^p}\\
    &\overset{(3)}{\leq} 2^{1-p}(1+\frac{\Delta}{s})^p
\end{split}
\end{equation*}
\fi
Similarly, to lower bound expression~(\ref{expr}),
\begin{equation*}
\begin{split}
    \frac{\sum_{i=2}^m u_i^p+(u_1+\Delta)^p}{\sum_{i=2}^m u_i^p+u_1^p}&\overset{(2)+(3)}{\geq}\frac{(m-1)^{1-p}(\sum_{i=2}^m u_i)^p+(u_1+\Delta)^p}{(m-1)^{1-p}(\sum_{i=2}^m u_i)^p+u_1^p}\\
    &=\frac{(s-u_1)^p+((m-1)^\frac{p-1}{p}(u_1+\Delta))^p}{(s-u_1)^p+((m-1)^\frac{p-1}{p}u_1)^p}\\
    &\overset{(3)}{\geq} 2^{p-1}\big(\frac{((m-1)^\frac{p-1}{p}-1)u_1+s+(m-1)^\frac{p-1}{p}\Delta}{((m-1)^\frac{p-1}{p}-1)u_1+s}\big)^p\\
    &\geq 2^{p-1}(1+\frac{(m-1)^\frac{p-1}{p}\Delta}{s})^p
\end{split}
\end{equation*}

%
Taking the division between the supremum and the infimum, we get
\begin{equation*}
\rho_\mathcal{M}\leq 2^{2-2p}(\frac{s+M}{s+(m-1)^{\frac{p-1}{p}}})^p\leq 2^{2-2p}(\frac{n-1+M}{n-1+(m-1)^{\frac{p-1}{p}}})^p
\end{equation*}
\end{proof}

In a typical streaming model where $m$ is large and $n\gg M$, $\rho_\mathcal{M}\lessapprox 2^{2-2p}\leq 4$.
To get a better sense of how $\rho$ changes with $p$, we plot several curves with different hyper-parameters in Figure~\ref{fig:mul_sens}.
Note that the pure multiplicative sensitivity only depends on $n, m, M$ and $p$ which are public information.

\tikzset{StyleOne/.style={samples=200, thick}}
\tikzset{StyleTwo/.style={samples=200, dashed}}
\begin{figure}[!htb]
    \centering
    \begin{minipage}{0.43\textwidth}
        \centering
        \resizebox{0.9\textwidth}{!}{\begin{tikzpicture}
\begin{axis}[
  grid=major,
  ymin=0.5, ymax=5, xmin=0, xmax=1,
  ytick align=outside, ytick pos=left,
  xtick align=outside, xtick pos=left,
  xlabel=$p$,
  ylabel={Pure Multiplicative Sensitivity},
  legend pos=north east,
]

\addplot+[
  blue, mark options={scale=0.75},
  smooth, dashed,
  error bars/.cd, 
    y fixed,
    y dir=both, 
    y explicit
] table [x=x, y=y, col sep=comma] {data/mul_sens/1.txt};
\addlegendentry{n=$2^{15}$, m=$2^{20}$, M=1}

\addplot+[
  red, mark options={scale=0.75},
  smooth, dashed,
  error bars/.cd, 
    y fixed,
    y dir=both, 
    y explicit
] table [x=x, y=y, col sep=comma] {data/mul_sens/32.txt};
\addlegendentry{n=$2^{15}$, m=$2^{20}$, M=$2^5$}

\addplot+[
  green, mark options={scale=0.75},
  dashed,
  error bars/.cd, 
    y fixed,
    y dir=both, 
    y explicit
] table [x=x, y=y, col sep=comma] {data/mul_sens/1024.txt};
\addlegendentry{n=$2^{15}$, m=$2^{20}$, M=$2^{10}$}

\addplot+[
  gray, mark options={scale=0.75},
  dashed,
  error bars/.cd, 
    y fixed,
    y dir=both, 
    y explicit
] table [x=x, y=y, col sep=comma] {data/mul_sens/32768.txt};
\addlegendentry{n=$2^{15}$, m=$2^{20}$, M=$2^{15}$}




\end{axis}
\end{tikzpicture}}
        \caption{Pure multiplicative sensitivity.}
        \label{fig:mul_sens}
    \end{minipage}~~
    \begin{minipage}{0.43\textwidth}
        \centering
        \resizebox{0.9\textwidth}{!}{\begin{tikzpicture}
\begin{axis}[
  grid=major,
  ymin=0.4, ymax=2.5, xmin=0, xmax=20,
  ytick align=outside, ytick pos=left,
  xtick align=outside, xtick pos=left,
  xlabel=$x$,
  ylabel={$\frac{\mathcal{D}_{p, 1}(x)}{\mathcal{D}_{p, 2^p}(x)}$},
  legend pos=north east,
  legend columns=2,
]

\addplot+[
  mark options={scale=0.75}, 
  mark=none,
  smooth,
  error bars/.cd, 
    y fixed,
    y dir=both, 
    y explicit
] table [x=x, y=y, col sep=comma] {data/ratio/0_1.txt};
\addlegendentry{$p=0.1$}

\addplot+[
  mark options={scale=0.75}, 
  mark=none,
  smooth,
  error bars/.cd, 
    y fixed,
    y dir=both, 
    y explicit
] table [x=x, y=y, col sep=comma] {data/ratio/0_2.txt};
\addlegendentry{$p=0.2$}

\addplot+[
  mark options={scale=0.75}, 
  mark=none,
  smooth,
  error bars/.cd, 
    y fixed,
    y dir=both, 
    y explicit
] table [x=x, y=y, col sep=comma] {data/ratio/0_3.txt};
\addlegendentry{$p=0.3$}

\addplot+[
  mark options={scale=0.75}, 
  mark=none,
  smooth,
  error bars/.cd, 
    y fixed,
    y dir=both, 
    y explicit
] table [x=x, y=y, col sep=comma] {data/ratio/0_4.txt};
\addlegendentry{$p=0.4$}

\addplot+[
  mark options={scale=0.75}, 
  mark=none,
  smooth,
  error bars/.cd, 
    y fixed,
    y dir=both, 
    y explicit
] table [x=x, y=y, col sep=comma] {data/ratio/0_5.txt};
\addlegendentry{$p=0.5$}

\addplot+[
  mark options={scale=0.75}, 
  mark=none,
  smooth,
  error bars/.cd, 
    y fixed,
    y dir=both, 
    y explicit
] table [x=x, y=y, col sep=comma] {data/ratio/0_6.txt};
\addlegendentry{$p=0.6$}

\addplot+[
  mark options={scale=0.75}, 
  mark=none,
  smooth,
  error bars/.cd, 
    y fixed,
    y dir=both, 
    y explicit
] table [x=x, y=y, col sep=comma] {data/ratio/0_7.txt};
\addlegendentry{$p=0.7$}

\addplot+[
  mark options={scale=0.75}, 
  mark=none,
  smooth,
  error bars/.cd, 
    y fixed,
    y dir=both, 
    y explicit
] table [x=x, y=y, col sep=comma] {data/ratio/0_8.txt};
\addlegendentry{$p=0.8$}

\addplot+[
  mark options={scale=0.75}, 
  mark=none,
  smooth,
  error bars/.cd, 
    y fixed,
    y dir=both, 
    y explicit
] table [x=x, y=y, col sep=comma] {data/ratio/0_9.txt};
\addlegendentry{$p=0.9$}

\addplot+[
  mark options={scale=0.75}, 
  mark=none,
  smooth,
  error bars/.cd, 
    y fixed,
    y dir=both, 
    y explicit
] table [x=x, y=y, col sep=comma] {data/ratio/1.txt};
\addlegendentry{$p=1.0$}
\end{axis}
\end{tikzpicture}}
        \caption{The curves of $\frac{\mathcal{D}_{p, 1}(x)}{\mathcal{D}_{p, 2^p}(x)}$ with different values of $p\in(0, 1]$ on $\mathbb{R}^+$. The negative half is symmetric.}
        \label{fig:ratio}
    \end{minipage}
\end{figure}

\subsection{Differentially Private $\mathbb{F}_1$ Sketch}

Instead of directly diving into the complete analysis, we first motivate the analysis with the special case of $p=1$.
In this case, the symmetric $1^{st}$-stable distribution is the well-known Cauchy distribution: $\mathcal{D}_{1, \zeta}(x)=\frac{1}{\pi}\cdot\frac{\zeta}{\zeta^2+x^2}$, and thus the analyses are significantly simplified.
Note that this special case has already been studied before in~\citet{mir2011pan}
so we do not take it as our contribution.
Instead, we only present it to pave the way for the proof of general $\mathbb{F}_p$ sketch. 

\begin{theorem}[$\epsilon$-DP for $\mathbb{F}_1$ sketch]
Let $\rho_1$ represent the multiplicative sensitivity of the first frequency moments.
When the size of the sketch $r=1$, $\mathbb{F}_1$ is $\ln\rho_1$-differentially private.
\label{thm:f1}
\end{theorem}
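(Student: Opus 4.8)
The plan is to reduce the claim to a uniform pointwise bound on the ratio of two Cauchy densities, and then invoke the standard fact that such a bound yields pure differential privacy. First I would record what the sketch outputs when $r=1$: by the $1$-stability property, the single coordinate $\textbf{a}$ is distributed as $\mathcal{D}_{1,F_1(\mathcal{S})}$, i.e.\ a Cauchy random variable whose scale parameter equals the true first frequency moment $F_1(\mathcal{S})$. Hence for neighboring streams $\mathcal{S},\mathcal{S}'$ the two output densities are $P(x)=\mathcal{D}_{1,\zeta_1}(x)$ and $Q(x)=\mathcal{D}_{1,\zeta_2}(x)$ with $\zeta_1=F_1(\mathcal{S})$ and $\zeta_2=F_1(\mathcal{S}')$, and by Definition~\ref{def:mul} the scales satisfy $\zeta_1/\zeta_2\in[1/\rho_1,\rho_1]$.

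Next I would write the density ratio explicitly using the closed-form Cauchy density $\mathcal{D}_{1,\zeta}(x)=\frac{1}{\pi}\frac{\zeta}{\zeta^2+x^2}$:
$$
\frac{P(x)}{Q(x)}=\frac{\zeta_1}{\zeta_2}\cdot\frac{\zeta_2^2+x^2}{\zeta_1^2+x^2}.
$$
The core step is to maximize the right-hand side over $x\in\mathbb{R}$. Treating it as a function of $t=x^2\geq 0$, its monotonicity is governed by the sign of $\zeta_1^2-\zeta_2^2$: when $\zeta_1>\zeta_2$ the factor $\frac{\zeta_2^2+t}{\zeta_1^2+t}$ increases from $\zeta_2^2/\zeta_1^2$ at $t=0$ up toward $1$ as $t\to\infty$, so the supremum of the ratio is $\zeta_1/\zeta_2$; when $\zeta_1<\zeta_2$ it decreases, and the maximum is attained at $t=0$ with value $\zeta_2/\zeta_1$. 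In either case $\sup_x\frac{P(x)}{Q(x)}=\max\{\zeta_1/\zeta_2,\zeta_2/\zeta_1\}\leq\rho_1$.

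Finally I would conclude: since $P(x)/Q(x)\leq\rho_1=e^{\ln\rho_1}$ for every $x$, integrating over any measurable output set $s$ gives $\mathbb{P}[\mathcal{M}(\mathcal{S})\in s]\leq\rho_1\,\mathbb{P}[\mathcal{M}(\mathcal{S}')\in s]$, which is exactly $\ln\rho_1$-DP with $\delta=0$. There is no serious obstacle in this case: the closed form of the Cauchy density makes the maximization elementary, and the only mild subtlety is that in the $\zeta_1>\zeta_2$ branch the supremum is approached as $|x|\to\infty$ rather than attained, which is harmless for a sup bound. The whole point of isolating $p=1$ is that this transparency vanishes for general $p\in(0,1)$, where the $p$-stable density has no closed form and the same maximization becomes the genuine difficulty handled in Theorem~\ref{thm:fp}.
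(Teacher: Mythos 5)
Your proposal is correct and follows essentially the same route as the paper: both exploit the closed-form Cauchy density to show the ratio of output densities is monotone in $x^2$, hence bounded by its extreme values $\max\{\zeta_1/\zeta_2,\zeta_2/\zeta_1\}\leq\rho_1$ at $x=0$ and $|x|\to\infty$, and then integrate over the output set. Your version is slightly more explicit in treating both orderings of the two scales (the paper fixes the worst case $\zeta_2=\rho_1\zeta_1$ and observes the ratio is decreasing), but the substance is identical.
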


\begin{proof}[Proof for Theorem~\ref{thm:f1}]
$ \frac{\mathcal{D}_{1,F_1}(x)}{\mathcal{D}_{1,\rho_1F_1}(x)}=\frac{\rho_1^2F_1^2+x^2}{\rho_1(F_1^2+x^2)}$ is a decreasing function.
Thus,
$$
\frac{1}{\rho_1}=\frac{\mathcal{D}_{1,F_1}(\infty)}{\mathcal{D}_{1,\rho_1F_1}(\infty)}\leq  \frac{\mathcal{D}_{1,F_1}(x)}{\mathcal{D}_{1,\rho_1F_1}(x)}\leq \frac{\mathcal{D}_{1,F_1}(0)}{\mathcal{D}_{1,\rho_1F_1}(0)} = \rho_1
$$
Then, for any data stream $\mathcal{S}$ and arbitrary measurable subset $s$,
\begin{equation*}
\begin{split}
\mathbb{P}[\mathbb{F}_1(\mathcal{S})\in s]&=\int_{x\in s}\mathcal{D}_{1, F_1(\mathcal{S})}(x)dx=\int_{x\in s}\frac{\mathcal{D}_{1,F_1(\mathcal{S})}(x)}{\mathcal{D}_{1,F_1(\mathcal{S}')}(x)}\mathcal{D}_{1,F_1(\mathcal{S}')}(x)dx\\
&\leq\int_{x\in s}\rho_1\mathcal{D}_{1,F_1(\mathcal{S}')}(x)dx = e^{\ln\rho_1}\mathbb{P}[\mathbb{F}_1(\mathcal{S}')\in s]
\end{split}
\end{equation*}
\end{proof}

\subsection{Differentially Private $\mathbb{F}_p$ Sketch, $p\in(0, 1]$}

The example of $\mathbb{F}_1$ being $\epsilon$-DP indicates the possibility that $\mathbb{F}_p$ might have similar property when $p\in(0, 1]$.
To validate, we plot the curves for different values of $p$s as shown in Figure~\ref{fig:ratio}.
From the figure we can tell that when $p\in(0, 1]$, the ratio $\frac{\mathcal{D}_{p, 1}(x)}{\mathcal{D}_{p, 2}(x)}$ seems to be well-bounded and preserve $\epsilon$-DP.

We now prove the conjecture as formalized in Theorem~\ref{thm:fp}.
\begin{theorem}[$\epsilon$-DP for Algorithm~\ref{alg:sketch}]
Let $\rho_p$ represent the multiplicative sensitivity of the $p$-th frequency moments. When $r=1$ and $p\in(0, 1]$, $\mathbb{F}_p$ sketch (Algorithm~\ref{alg:sketch}) is $\frac{1}{p}\ln\rho_p$-differentially private.
\label{thm:fp}
\end{theorem}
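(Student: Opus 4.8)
The plan is to follow the template of Theorem~\ref{thm:f1}. Since $r=1$, the sketch output $\textbf{a}$ is a single real number distributed exactly as $\mathcal{D}_{p,F_p(\mathcal{S})}$, so by the same integral manipulation used there (write $\mathbb{P}[\mathbb{F}_p(\mathcal{S})\in s]=\int_{x\in s}\frac{\mathcal{D}_{p,F_p(\mathcal{S})}(x)}{\mathcal{D}_{p,F_p(\mathcal{S}')}(x)}\mathcal{D}_{p,F_p(\mathcal{S}')}(x)\,dx$) it suffices to bound the pointwise density ratio $\frac{\mathcal{D}_{p,F_p(\mathcal{S})}(x)}{\mathcal{D}_{p,F_p(\mathcal{S}')}(x)}$ uniformly in $x$ by $e^{\frac1p\ln\rho_p}=\rho_p^{1/p}$ for every pair of neighboring streams. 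By Theorem~\ref{thm:sens} the two scale parameters satisfy $F_p(\mathcal{S})/F_p(\mathcal{S}')\in[\rho_p^{-1},\rho_p]$, so the problem reduces to a statement purely about the symmetric $p$-stable density under a change of scale.

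The first simplification I would use is the self-similarity of stable densities: writing $f:=\mathcal{D}_{p,1}$ and substituting $u=\zeta^{1/p}t$ in the Fourier integral gives $\mathcal{D}_{p,\zeta}(x)=\zeta^{-1/p}f(\zeta^{-1/p}x)$, so that for $\zeta_1<\zeta_2$ the ratio factors as $\frac{\mathcal{D}_{p,\zeta_1}(x)}{\mathcal{D}_{p,\zeta_2}(x)}=(\zeta_2/\zeta_1)^{1/p}\,\frac{f(\zeta_1^{-1/p}x)}{f(\zeta_2^{-1/p}x)}$. I then split into the two directions required by the definition of DP. In the scale-down direction (numerator narrower, $F_p(\mathcal{S})\le F_p(\mathcal{S}')$) the argument of $f$ in the numerator has the larger magnitude, so unimodality of the symmetric $p$-stable density (it is decreasing in $|x|$) makes the fractional factor at most $1$, giving $\frac{\mathcal{D}_{p,\zeta_1}(x)}{\mathcal{D}_{p,\zeta_2}(x)}\le(\zeta_2/\zeta_1)^{1/p}\le\rho_p^{1/p}$, with equality at $x=0$. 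This is the binding direction: it forces the exponent $\tfrac1p$ and reproduces the $x=0$ value visible in Figure~\ref{fig:ratio}.

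In the scale-up direction (numerator wider) the fractional factor exceeds $1$, so unimodality is not enough; here I would prove the key lemma that $G(x):=x^{1+p}f(x)$ is non-decreasing on $(0,\infty)$. Differentiating and integrating by parts in $f(x)=\frac1\pi\int_0^\infty\cos(xt)e^{-t^p}\,dt$ turns $G'(x)\ge0$ into the positivity of the cosine transform $\int_0^\infty\cos(xt)(1+t^p)e^{-t^p}\,dt\ge0$ for all $x$. Granting the lemma, $G(u)\le G(\lambda u)$ yields $\frac{f(u)}{f(\lambda u)}\le\lambda^{1+p}$ for $\lambda\ge1$, and with $\lambda=(\zeta_2/\zeta_1)^{1/p}$ this caps the scale-up ratio at $\lambda^{-1}\lambda^{1+p}=\lambda^p=\zeta_2/\zeta_1\le\rho_p\le\rho_p^{1/p}$, where the last inequality uses $p\le1$. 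Combining the two directions gives the uniform bound $\rho_p^{1/p}=e^{\frac1p\ln\rho_p}$ over all ordered neighboring pairs, which is exactly the claimed $\frac1p\ln\rho_p$-DP guarantee.

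The main obstacle is this cosine-transform positivity. Because the $p$-stable density has no closed form for $p\in(0,1)$, neither $\int_0^\infty\cos(xt)(1+t^p)e^{-t^p}\,dt\ge0$ nor the companion unimodality fact $\int_0^\infty t\sin(xt)e^{-t^p}\,dt\ge0$ can be verified by direct computation; each needs a genuine analytic argument — for instance exploiting the sign/complete-monotonicity structure of these transforms, or an asymptotic-series expansion together with a sign-change count. I expect essentially all the effort to be concentrated in establishing the monotonicity of $x^{1+p}f(x)$, with the reduction and scaling steps around it being routine.
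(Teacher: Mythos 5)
Your reduction is sound and, after a change of variables, coincides with the paper's own: the scale-down bound via unimodality and the self-similarity $\mathcal{D}_{p,\zeta}(x)=\zeta^{-1/p}\mathcal{D}_{p,1}(\zeta^{-1/p}x)$ is exactly the paper's upper bound $\rho_p^{1/p}$ attained at $x=0$, and your key lemma that $G(x)=x^{1+p}\mathcal{D}_{p,1}(x)$ is non-decreasing on $(0,\infty)$ is equivalent to the paper's statement that $h(\rho)=\int_0^\infty \rho^{-1}\exp(-\rho F_p t^p)\cos(tx)\,dt$ is decreasing in $\rho$. Both formulations collapse, as you note, to the single inequality $\int_0^\infty(1+t^p)e^{-t^p}\cos(xt)\,dt\ge 0$ for all $x$, i.e.\ positive definiteness of $s\mapsto(1+|s|^p)e^{-|s|^p}$.

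The genuine gap is that you stop there. You correctly flag this cosine-transform positivity as ``the main obstacle'' and list candidate strategies (complete monotonicity, sign-change counting) without carrying any of them out --- but this inequality \emph{is} the theorem: everything surrounding it is routine, and the paper devotes essentially its entire proof to this one step. The paper's argument runs as follows: the second derivative of $u\mapsto(1+u^{1/2})e^{-u^{1/2}}$ is $e^{-u^{1/2}}/(4u^{1/2})$, and the identity $\int_0^\infty\exp\{-1/t-at\}\,\frac{dt}{2\sqrt t}=\frac{\sqrt\pi}{2}\,e^{-2\sqrt a}/\sqrt a$ (established via a differential equation in $a$) exhibits $(1+u^{1/2})e^{-u^{1/2}}$ as the Laplace transform of the positive density $e^{-1/(4t)}/(4\sqrt\pi\,t^{5/2})$, hence as a mixture of exponentials $e^{-cu}$ with $c>0$; substituting $u=|s|^{2p}$ then shows $(1+|s|^p)e^{-|s|^p}$ is a mixture of the functions $e^{-c|s|^{2p}}$, which are characteristic functions of $2p$-stable laws precisely because $2p\le 2$ --- this is where the restriction $p\le 1$ enters --- and therefore positive definite. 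Your ``complete-monotonicity structure'' hunch points in the right direction, but without this (or an equivalent) argument the proof is incomplete. (The unimodality of the symmetric stable density, which both you and the paper invoke for the other direction, is classical and is not the issue.)
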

\begin{wrapfigure}{r}{0.43\textwidth}
    \ificlr\vspace{-30pt}\fi
    \centering
    \resizebox{0.4\textwidth}{!}{\begin{tikzpicture}
\begin{axis}[
  grid=major,
  ymin=-0.5, ymax=30, xmin=0, xmax=1,
  ytick align=outside, ytick pos=left,
  xtick align=outside, xtick pos=left,
  xlabel=$p$,
  ylabel={$\epsilon$},
  legend pos=north east,
]

\addplot+[
  blue, mark options={scale=0.75},
  smooth, dashed,
  error bars/.cd, 
    y fixed,
    y dir=both, 
    y explicit
] table [x=x, y=y, col sep=comma] {data/privacy/pure/16.txt};


\end{axis}
\end{tikzpicture}}
    \caption{Privacy budget $\epsilon$ vs. $p$. $n=2^{15}, m=2^{20}, M=2^4$.}
    \label{fig:privacy}  
    \ificlr\vspace{-20pt}\else\vspace{-10pt}\fi 
\end{wrapfigure}
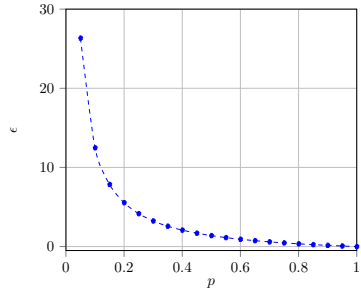
\begin{proof}[Proof for Theorem~\ref{thm:fp}]
To prove Theorem~\ref{thm:fp}, we prove the following inequality.
$$\rho_p^{-\frac{1}{p}} < \rho_p^{-1} \leq \frac{\mathcal{D}_{p,F_p}(x)}{\mathcal{D}_{p,\rho_pF_p}(x)}\leq \rho_p^{\frac{1}{p}}$$

We first prove the right-hand-side of the inequality.
Observe that $\mathcal{D}_{p, \zeta}(x)=\zeta^{-\frac{1}{p}}\mathcal{D}_{p, 1}(\zeta^{-\frac{1}{p}}x)$ due to $p$-stability.
Thus, 
$$\frac{\mathcal{D}_{p, F_p}(x)}{\mathcal{D}_{p, \rho_pF_p}(x)}=\rho_p^\frac{1}{p}\frac{\mathcal{D}_{p, 1}(F_p^{-\frac{1}{p}}x)}{\mathcal{D}_{p, 1}((\rho_pF_p)^{-\frac{1}{p}}x)}\leq\rho_p^\frac{1}{p}\frac{\mathcal{D}_{p, F_p}(0)}{\mathcal{D}_{p, \rho_pF_p}(0)}=\rho_p^\frac{1}{p}$$
as $\mathcal{D}_{p, 1}$ is increasing on $(-\infty, 0]$ and decreasing on $[0, \infty)$, and $\rho_p\geq1$.

To prove the left-hand-side of the inequality, we reorganize it into the format of a Fourier transform.
$$
\int_0^\infty(\rho_p \exp(-F_pt^p)-\exp(-\rho_pF_pt^p))\cos(tx)dt\geq 0
$$
It suffices to show that 
$$
h(\rho) = \int_0^\infty \frac{\exp(-\rho F_pt^p)}{\rho} \cos(tx) dt
$$
is decreasing.
Taking the first derivative of $h$, we have
$$
\frac{\partial h}{\partial \rho}=-\frac{1}{\rho^2}\int_0^\infty g(t)\cos(tx)dt \text{ , where } g(t)=\exp(-\rho F_pt^p)(\rho F_pt^p+1)
$$
According to P\'olya criterion~\citep{gneiting2001criteria}, it suffices to show that $g$ is positive definite.
We first observe that the function $0\le u\mapsto (1+u^{1/2})e^{-u^{1/2}}$ is the Laplace transform of the positive function $0<t\mapsto \dfrac{e^{-1/(4t)}}{4 \sqrt{\pi }\, t^{5/2}}$ (the proof is deferred to the end) and hence a mixture of exponential functions $0\le u\mapsto e^{-cu}$ with $c>0$.
Thus with variable substitution, the function $s\mapsto (1+|s|^p)e^{-|s|^p}$ is a mixture of functions $s\mapsto e^{-c|s|^{2p}}$ with $c>0$, which are positive definite for any $p\in(0,1]$ as they are characteristic functions of stable distributions.

The last step is to prove the function $0\le u\mapsto (1+u^{1/2})e^{-u^{1/2}}$ is the Laplace transform of $0<t\mapsto \dfrac{e^{-1/(4t)}}{4 \sqrt{\pi }\, t^{5/2}}$
Note that the second derivative of $(1+u^{1/2})e^{-u^{1/2}}$ in $u$ is $e^{-u^{1/2}}/(4u^{1/2})$. 
So, after a simple rescaling, it is enough to show that 
\begin{equation}
J(a):=\int_0^\infty\exp\Big\{-\frac1t-at\Big\}\frac{dt}{2\sqrt t}=\frac{\sqrt\pi}2
\frac{e^{-2\sqrt a}}{\sqrt a}
\label{tag}
\end{equation}
where $a>0$. 

Using substitutions $t=u^2$ and then $u=1/(x\sqrt a)$, we get 
$$J(a)=\int_0^\infty\exp\Big\{-\frac1{u^2}-au^2\Big\}\,du
=K(a)/\sqrt a,$$
where
$$K(a):=\int_0^\infty\exp\Big\{-ax^2-\frac1{x^2}\Big\}\,\frac{dx}{x^2}.$$
Note that $K'(a)=-J(a)$ and $K(a)=J(a)\sqrt a$. So, we get the differential equation 
$$J'(a)=-\Big(\frac1{\sqrt a}+\frac1{2a}\Big)J(a),$$
whose general solution is given by 
$$J(a)=\frac c{\sqrt a}\,e^{-2\sqrt a}$$
for a constant $c$.
To determine $c$, note that 
$$K(a)=J(a)\sqrt a
=\int_0^\infty\exp\Big\{-\frac1{u^2}-au^2\Big\}\,du\,\sqrt a
=\int_0^\infty\exp\Big\{-\frac a{y^2}-y^2\Big\}\,dy$$
and 
$$c=K(0+)=\int_0^\infty\exp\{-y^2\}\,dy=\frac{\sqrt\pi}2.$$
So, (\ref{tag}) follows.
\end{proof}

\subsection{Privacy Amplification by Sub-sampling}

The last step of Algorithm~\ref{alg:sketch} estimates $\zeta$ given samples from the stable distributions.
There are many candidate estimators such as the geometric estimator and the harmonic estimator~\citep{li2008estimators,li2009compressed}.
These estimators typically, as suggested in~\citet{li2008estimators}, require at least $r\geq50$ samples to give an accurate estimation of $\zeta$.
However, the privacy parameter $\epsilon$ grows with $r$ with trivial composition~\citep{dwork2006calibrating}, which might result in too weak privacy protection.

To address, we follow the standard approach, amplifying privacy using sub-sampling.
Different from Algorithm~\ref{alg:sketch}, each input has probability $q$ to be inserted into each dimension of $\textbf{a}$, as presented in Algorithm~\ref{alg:sketch_subsample}.
If we take $q=\frac{1}{r}$, then the privacy parameters in Theorem~\ref{thm:fp} hold as is.
The proof is a simple application of the composition theorems~\cite{dwork2006calibrating} and privacy amplification (Theorem 8 in~\cite{balle2018privacy}).

\begin{theorem}[$\epsilon$-DP for Algorithm~\ref{alg:sketch_subsample}]
Let $\rho_p$ represent the multiplicative sensitivity of the $p$-th frequency moments. 
When $p\in(0, 1]$, $\mathbb{F}_p$ sketch with sub-sampling rate $q$ is $\frac{qr}{p}\ln\rho_p$-differentially private.
\label{thm:fp_subsample}
\end{theorem}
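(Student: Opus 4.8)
The plan is to reduce the multi-dimensional, sub-sampled mechanism to the single-coordinate guarantee of Theorem~\ref{thm:fp} using two off-the-shelf tools: privacy amplification by sub-sampling and basic sequential composition. First I would isolate a single coordinate $a_j$ of the sketch $\textbf{a}$. Without sub-sampling, Theorem~\ref{thm:fp} already shows that one coordinate is $\epsilon_0$-DP with $\epsilon_0=\frac{1}{p}\ln\rho_p$, since the ratio $\mathcal{D}_{p,F_p}(x)/\mathcal{D}_{p,\rho_pF_p}(x)$ is pinned between $\rho_p^{-1/p}$ and $\rho_p^{1/p}$. The key observation is that in Algorithm~\ref{alg:sketch_subsample} the single differing key-value pair of two neighboring streams is inserted into coordinate $j$ independently with probability $q$: when it is not inserted, the conditional laws of $a_j$ under $\mathcal{S}$ and $\mathcal{S}'$ coincide, and when it is inserted the two scale parameters $F_p(\mathcal{S})$ and $F_p(\mathcal{S}')$ differ by at most the factor $\rho_p$ of Theorem~\ref{thm:sens}. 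This is exactly the input structure required by the sub-sampling amplification framework.

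Next I would invoke Theorem~8 of \citet{balle2018privacy} to conclude that each sub-sampled coordinate is $\epsilon_1$-DP with $\epsilon_1\approx q\epsilon_0=\frac{q}{p}\ln\rho_p$, the exact amplified parameter being $\ln(1+q(e^{\epsilon_0}-1))$, which we treat at first order in the small-$\epsilon_0$ regime. Finally, since the $r$ coordinates use independent projection rows and independent sub-sampling coins, they constitute $r$ non-adaptive mechanisms evaluated on the same stream; basic sequential composition \citep{dwork2006calibrating} then sums the per-coordinate budgets to yield $r\epsilon_1=\frac{qr}{p}\ln\rho_p$-DP, as claimed. As a sanity check, setting $q=\frac{1}{r}$ recovers $\frac{1}{p}\ln\rho_p$, matching the parameter of Theorem~\ref{thm:fp}.

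The main obstacle I anticipate is the amplification step rather than the composition step. The clean linear bound $\epsilon_1=q\epsilon_0$ is only the first-order approximation of $\ln(1+q(e^{\epsilon_0}-1))$, which in fact slightly exceeds $q\epsilon_0$; making the statement fully rigorous requires either restricting to small $\epsilon_0$ or replacing the advertised budget by the exact $r\ln(1+q(e^{\epsilon_0}-1))$. A secondary point of care is verifying that the neighboring relation ``differ in one key-value pair'' genuinely matches the add/remove semantics demanded by \citet{balle2018privacy}, so that an un-sampled differing pair really does produce identical coordinate distributions; this is precisely where the per-coordinate, per-pair independence of the sub-sampling in Algorithm~\ref{alg:sketch_subsample} is essential.
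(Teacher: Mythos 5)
Your proposal matches the paper's own (one-line) proof exactly: the paper simply cites basic composition \citep{dwork2006calibrating} over the $r$ coordinates together with privacy amplification by sub-sampling (Theorem~8 of \citet{balle2018privacy}) applied to the per-coordinate guarantee of Theorem~\ref{thm:fp}. Your caveat that amplification yields $\ln(1+q(e^{\epsilon_0}-1))$ rather than the advertised first-order bound $q\epsilon_0$ is well taken, but it applies equally to the paper's argument and is not a defect of your reconstruction.
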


\begin{algorithm}[H]
    \SetAlgoLined
    \SetKwInOut{Input}{Input}
    \SetKwInOut{Output}{Output}
    \SetKwBlock{Construction}{Construction}{}
    \SetKwBlock{Query}{Query}{}
    \Input{Data stream: $\mathcal{S}=\{(k_1, v_1), \cdots, (k_n, v_n)\}$; privacy budget: $(\epsilon, \delta)$; accuracy constraint: $(\gamma, \eta)$; $p$ stable distribution: $\mathcal{D}_{p,1}$.}
    \Construction{
        Initialize $\textbf{a}=\{0\}^r$, $P\sim\mathcal{D}_{p,1}^{r\times m}$\;
        \For{$i\in[n]$}{
            $b\sim \text{Bernoulli}(q)$\;
            Let $e_{k_i}$ be the one-hot encoder of $k_i$, $\textbf{a} = \textbf{a} + \textbf{P}\times bv_i\textbf{e}_{k_i}$
        }
    }
    \Query{
        return scale\_estimator$(\textbf{a})/q^p$\;
    }
\caption{$\mathbb{F}_p$ sketch with sub-sampling. The only change appears in line 3-4 and 7, corresponding to line 3 and 5 in Algorithm~\ref{alg:sketch}. $\text{Bernoulli}(q)$ refers to Bernoulli distribution with success probability $q$.}
\label{alg:sketch_subsample}
\end{algorithm}

\section{Utility of Algorithm~\ref{alg:sketch_subsample}}

We depict the accuracy of a $F_p$ estimator with a pair of parameters $(\gamma, \eta)$.
\begin{definition}[$(\gamma, \eta)$-Accuracy]
A randomized algorithm $\mathcal{M}$ is said to be $(\gamma, \eta)$-accurate if 
$$
(1-\gamma)F_p(\mathcal{S}) \leq \mathcal{M}(\mathcal{S}) \leq (1+\gamma)F_p(\mathcal{S})~~~w.p.~~~1-\eta
$$
\end{definition}

Algorithm~\ref{alg:sketch_subsample} satisfies the following utility guarantee.
The space complexity is only worse than the optimal non-private algorithm~\citep{kane2011fast} by a logarithmic factor.
The accuracy bound is also a worst-case bound and the performance in practice is typically much better (Section~\ref{sec:evaluation}).

\begin{theorem}[Utility of Algorithm~\ref{alg:sketch_subsample}]
$\forall p\in(0, 1]$ and $\forall \gamma, \eta\in(0, 1)$, Algorithm~\ref{alg:sketch_subsample} is $(\gamma+\sqrt{\frac{1-2q^{p+1}+q^{2p+1}}{\lambda}}, \eta+\lambda)$-accurate if $r=\mathcal{O}\big(\gamma^{-2}\log(\frac{1}{\eta})\big)$.
In this case, Algorithm~\ref{alg:sketch_subsample} uses $\mathcal{O}\big(\gamma^{-2}\log(mM/(\gamma\eta))\log(\frac{1}{\eta})\big)$ bits.
\label{thm:utility}
\end{theorem}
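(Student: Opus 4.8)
The statement bundles two claims: a multiplicative accuracy guarantee and a polylogarithmic space bound. The plan is to treat the two sources of error in the estimate separately and then combine them by a union bound. Writing $\hat Z$ for the output of Algorithm~\ref{alg:sketch_subsample}, I would decompose the error into (i) the error of the scale estimator applied to the $r$ sketch coordinates, and (ii) the error introduced by Bernoulli sub-sampling together with the $q^{-p}$ rescaling in the query. The first is the classical stable-sketch guarantee; the second is the genuinely new piece, and is where the odd-looking quantity $1-2q^{p+1}+q^{2p+1}$ must come from.

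For step (i) I would condition on the realized sub-sampling. Given the Bernoulli choices, the surviving sub-stream has some frequency vector whose $p$-th moment I call $\tilde F_p$, and by $p$-stability every sketch entry is (conditionally) an independent draw from $\mathcal{D}_{p,\tilde F_p}$. This reduces the situation to the standard setting analyzed by \citet{indyk2006stable} and \citet{li2008estimators}: with $r=\mathcal{O}\big(\gamma^{-2}\log(1/\eta)\big)$ i.i.d.\ coordinates, the chosen scale estimator (median, harmonic, or geometric) returns a value in $[(1-\gamma)\tilde F_p,(1+\gamma)\tilde F_p]$ except with probability $\eta$. I would invoke that bound rather than reprove it, checking only that the constant hidden in $r$ matches the estimator of choice.

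For step (ii), the quantity $\tilde F_p/q^p$ is a random (over the sub-sampling) estimator of the true $F_p$, and the $q^{-p}$ factor is chosen precisely to de-bias the scale to leading order. I would control its relative mean-squared error $\mathbb{E}\big[(\tilde F_p/q^p-F_p)^2\big]/F_p^2$. Expanding the square requires the first two sub-sampled moments; using that the per-key contributions $\tilde f_k^{\,p}$ are independent (keys partition items, and the Bernoullis are independent across items), the bias and variance combine, and I expect the relative error to reduce to $1-2q^{p+1}+q^{2p+1}$ — possibly only under the paper's $n\gg M$ regime, and note it vanishes at $q=1$ as it must. A Chebyshev/Markov bound on the squared error then gives $\tilde F_p/q^p\in\big(1\pm\sqrt{(1-2q^{p+1}+q^{2p+1})/\lambda}\,\big)F_p$ except with probability $\lambda$. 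Composing the two multiplicative guarantees and union-bounding the two failure events yields relative error at most $\gamma+\sqrt{(1-2q^{p+1}+q^{2p+1})/\lambda}$ with probability $1-\eta-\lambda$ (the second-order cross term $\gamma\cdot\sqrt{\cdot/\lambda}$ being treated as negligible), which is the claimed $(\gamma+\sqrt{\cdot},\,\eta+\lambda)$-accuracy.

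Finally, for the space bound I would count $r=\mathcal{O}\big(\gamma^{-2}\log(1/\eta)\big)$ real counters and argue that each can be maintained in $\mathcal{O}\big(\log(mM/(\gamma\eta))\big)$ bits: the magnitude of a coordinate is polynomially bounded in $m,M$, and rounding the stable entries to this precision perturbs the estimate only by a lower-order factor absorbed into $\gamma$, while the projection matrix $\mathbf{P}$ is regenerated from a short seed rather than stored, contributing only to the logarithmic factor. Multiplying bits-per-counter by the number of counters gives $\mathcal{O}\big(\gamma^{-2}\log(mM/(\gamma\eta))\log(1/\eta)\big)$. \textbf{The main obstacle} I anticipate is step (ii): because $x\mapsto x^p$ is non-linear, $\tilde F_p$ does not simply scale as $q^p F_p$, so pinning the relative mean-squared error down to the exact closed form $1-2q^{p+1}+q^{2p+1}$ (rather than a loose bound), and ensuring the two error sources compose cleanly without a hidden cross term, is the delicate part.
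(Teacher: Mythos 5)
Your proposal follows essentially the same route as the paper: condition on the realized sub-sampling to invoke Indyk's Theorem~4 for the sketch part with $r=\mathcal{O}(\gamma^{-2}\log(1/\eta))$, bound the sub-sampling error via its first two moments plus Chebyshev to obtain the $\sqrt{(1-2q^{p+1}+q^{2p+1})/\lambda}$ term, and combine the two failure events. The obstacle you flag in step~(ii) --- that $\tilde F_p$ does not exactly scale as $q^p F_p$ because $x\mapsto x^p$ is nonlinear --- is genuine, but the paper itself simply asserts the expectation and variance identities (its Equation~5) without deriving them, so your outline is no less complete than the paper's own argument.
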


\begin{proof}
Let $\mathcal{SA}_q(\cdot)$ represent the sub-sampling process and $\mathbb{F}_p^r$ represent a $\mathbb{F}_p$ sketch with length $r$.
Then Algorithm~\ref{alg:sketch_subsample} can be represented as $\mathbb{F}_p^r\circ\mathcal{SA}_q$ where $\circ$ represents composition of mechanisms.

First, we need the accuracy of $\mathbb{F}_p$ sketch.
According to Theorem 4 of~\citet{indyk2006stable}, if we fix the sub-sampled items,
$$
\mathbb{P}[|\mathbb{F}_p^{\mathcal{O}\big(\gamma^{-2}\log(\frac{1}{\eta})\big)}\circ\mathcal{SA}_q(\mathcal{S})-F_p\circ\mathcal{SA}_q(\mathcal{S})|\leq\gamma F_p\circ\mathcal{SA}_q(\mathcal{S})]\geq 1-\eta
$$

Second, we need the accuracy of the sub-sampling process.
The expectation and variance of the sub-sampling process is as follow.
\begin{equation}
\mathbb{E}[F_p\circ\mathcal{SA}(\mathcal{S})]=q^pF_p(\mathcal{S}), \mathbb{V}[F_p\circ\mathcal{SA}(\mathcal{S})]\leq (1-2q^{p+1}+q^{2p+1})F_p^2(\mathcal{S})
\label{eq:u1}
\end{equation}
According to Chebyshev's inequality, 
\begin{equation}
\mathbb{P}[|F_p\circ\mathcal{SA}(\mathcal{S})-q^pF_p(\mathcal{S})|\leq\sqrt{\frac{1-2q^{p+1}+q^{2p+1}}{\lambda}}F_p(\mathcal{S})]\geq1-\lambda
\label{eq:u2}
\end{equation}

Combining~(\ref{eq:u1}) and~(\ref{eq:u2}) we get Theorem~\ref{thm:utility}.
\end{proof}


\section{Evaluation}
\label{sec:evaluation}

In this section, we first introduce experimental design, and then present the evaluation results.

\subsection{Evaluation Setup}
\label{subsec:setup}

\newcommand{\figwidth}{0.5\textwidth}
\tikzset{font={\fontsize{15pt}{10}\selectfont}}
\begin{figure}[h]
    \centering
    \begin{subfigure}[b]{\figwidth}
        \resizebox{0.9\textwidth}{!}{\begin{tikzpicture}

\pgfplotsset{
    grid=major, 
    compat=1.10,
    xmin=10000, xmax=10000000, xmode=log,
    xtick align=outside, xtick pos=left,
    xlabel= {Stream Size $n$},
    legend pos=north east,
    legend style={font=\small},
    legend columns=2, 
}


\begin{axis}[
    xmode=log,
    ymin=0, ymax=0.8, axis y line*=left, 
    ylabel={Multiplicative Error $\gamma$},
]

\addplot+[
  blue, mark options={scale=0.75}, mark=*, dashed,
  smooth, 
  error bars/.cd, 
    y fixed,
    y dir=both, 
    y explicit
] table [x=x, y=y, col sep=comma] {data/utility/synthetic/uniform/0_25.txt};
\addlegendentry{0.25}

\addplot [name path=upper,draw=none,forget plot] table[x=x, y expr=\thisrow{y}+\thisrow{err}/2,col sep=comma] {data/utility/synthetic/uniform/0_25.txt};
\addplot [name path=lower,draw=none,forget plot] table[x=x, y expr=\thisrow{y}-\thisrow{err}/2,col sep=comma] {data/utility/synthetic/uniform/0_25.txt};
\addplot [fill=blue!10,fill opacity=0.8,forget plot] fill between[of=upper and lower];

\addplot+[
  green, mark options={scale=0.75}, mark=*, dashed,
  smooth, 
  error bars/.cd, 
    y fixed,
    y dir=both, 
    y explicit
] table [x=x, y=y, col sep=comma] {data/utility/synthetic/uniform/0_5.txt};
\addlegendentry{0.5}

\addplot [name path=upper,draw=none,forget plot] table[x=x, y expr=\thisrow{y}+\thisrow{err}/2,col sep=comma] {data/utility/synthetic/uniform/0_5.txt};
\addplot [name path=lower,draw=none,forget plot] table[x=x, y expr=\thisrow{y}-\thisrow{err}/2,col sep=comma] {data/utility/synthetic/uniform/0_5.txt};
\addplot [fill=green!10,fill opacity=0.8,forget plot] fill between[of=upper and lower];

\addplot+[
  orange, mark options={scale=0.75}, mark=*, dashed,
  smooth, 
  error bars/.cd, 
    y fixed,
    y dir=both, 
    y explicit
] table [x=x, y=y, col sep=comma] {data/utility/synthetic/uniform/0_75.txt};
\addlegendentry{0.75}

\addplot [name path=upper,draw=none,forget plot] table[x=x, y expr=\thisrow{y}+\thisrow{err}/2,col sep=comma] {data/utility/synthetic/uniform/0_75.txt};
\addplot [name path=lower,draw=none,forget plot] table[x=x, y expr=\thisrow{y}-\thisrow{err}/2,col sep=comma] {data/utility/synthetic/uniform/0_75.txt};
\addplot [fill=orange!10,fill opacity=0.8,forget plot] fill between[of=upper and lower];

\addplot+[
  red, mark options={scale=0.75}, mark=*, dashed,
  smooth, 
  error bars/.cd, 
    y fixed,
    y dir=both, 
    y explicit
] table [x=x, y=y, col sep=comma] {data/utility/synthetic/uniform/1.txt};
\addlegendentry{1.0}

\addplot [name path=upper,draw=none,forget plot] table[x=x, y expr=\thisrow{y}+\thisrow{err}/2,col sep=comma] {data/utility/synthetic/uniform/1.txt};
\addplot [name path=lower,draw=none,forget plot] table[x=x, y expr=\thisrow{y}-\thisrow{err}/2,col sep=comma] {data/utility/synthetic/uniform/1.txt};
\addplot [fill=red!10,fill opacity=0.8,forget plot] fill between[of=upper and lower];

\end{axis}

\end{tikzpicture}}
        \label{fig:uniform}
        \caption{Uniformly Distributed Stream.}
    \end{subfigure}~~
    \begin{subfigure}[b]{\figwidth}
        \resizebox{0.9\textwidth}{!}{\begin{tikzpicture}

\pgfplotsset{
    grid=major, 
    compat=1.10,
    xmin=10000, xmax=10000000, xmode=log,
    xtick align=outside, xtick pos=left,
    xlabel= {Stream Size $n$},
    legend pos=north east,
    legend style={font=\small},
    legend columns=2, 
}


\begin{axis}[
    xmode=log,
    ymin=0, ymax=0.4, axis y line*=left, 
    ylabel={Multiplicative Error $\gamma$},
]

\addplot+[
  blue, mark options={scale=0.75}, mark=*, dashed,
  smooth, 
  error bars/.cd, 
    y fixed,
    y dir=both, 
    y explicit
] table [x=x, y=y, col sep=comma] {data/utility/synthetic/binomial/0_25.txt};
\addlegendentry{0.25}

\addplot [name path=upper,draw=none,forget plot] table[x=x, y expr=\thisrow{y}+\thisrow{err}/2,col sep=comma] {data/utility/synthetic/binomial/0_25.txt};
\addplot [name path=lower,draw=none,forget plot] table[x=x, y expr=\thisrow{y}-\thisrow{err}/2,col sep=comma] {data/utility/synthetic/binomial/0_25.txt};
\addplot [fill=blue!10,fill opacity=0.8,forget plot] fill between[of=upper and lower];

\addplot+[
  green, mark options={scale=0.75}, mark=*, dashed,
  smooth, 
  error bars/.cd, 
    y fixed,
    y dir=both, 
    y explicit
] table [x=x, y=y, col sep=comma] {data/utility/synthetic/binomial/0_5.txt};
\addlegendentry{0.5}

\addplot [name path=upper,draw=none,forget plot] table[x=x, y expr=\thisrow{y}+\thisrow{err}/2,col sep=comma] {data/utility/synthetic/binomial/0_5.txt};
\addplot [name path=lower,draw=none,forget plot] table[x=x, y expr=\thisrow{y}-\thisrow{err}/2,col sep=comma] {data/utility/synthetic/binomial/0_5.txt};
\addplot [fill=green!10,fill opacity=0.8,forget plot] fill between[of=upper and lower];

\addplot+[
  orange, mark options={scale=0.75}, mark=*, dashed,
  smooth, 
  error bars/.cd, 
    y fixed,
    y dir=both, 
    y explicit
] table [x=x, y=y, col sep=comma] {data/utility/synthetic/binomial/0_75.txt};
\addlegendentry{0.75}

\addplot [name path=upper,draw=none,forget plot] table[x=x, y expr=\thisrow{y}+\thisrow{err}/2,col sep=comma] {data/utility/synthetic/binomial/0_75.txt};
\addplot [name path=lower,draw=none,forget plot] table[x=x, y expr=\thisrow{y}-\thisrow{err}/2,col sep=comma] {data/utility/synthetic/binomial/0_75.txt};
\addplot [fill=orange!10,fill opacity=0.8,forget plot] fill between[of=upper and lower];

\addplot+[
  red, mark options={scale=0.75}, mark=*, dashed,
  smooth, 
  error bars/.cd, 
    y fixed,
    y dir=both, 
    y explicit
] table [x=x, y=y, col sep=comma] {data/utility/synthetic/binomial/1.txt};
\addlegendentry{1.0}

\addplot [name path=upper,draw=none,forget plot] table[x=x, y expr=\thisrow{y}+\thisrow{err}/2,col sep=comma] {data/utility/synthetic/binomial/1.txt};
\addplot [name path=lower,draw=none,forget plot] table[x=x, y expr=\thisrow{y}-\thisrow{err}/2,col sep=comma] {data/utility/synthetic/binomial/1.txt};
\addplot [fill=red!10,fill opacity=0.8,forget plot] fill between[of=upper and lower];

\end{axis}

\end{tikzpicture}}
        \label{fig:binomial}
        \caption{Binomially Distributed Stream.}
    \end{subfigure}
    \caption{Results on Synthetic Data.}
    \label{fig:synthetic}
\end{figure}
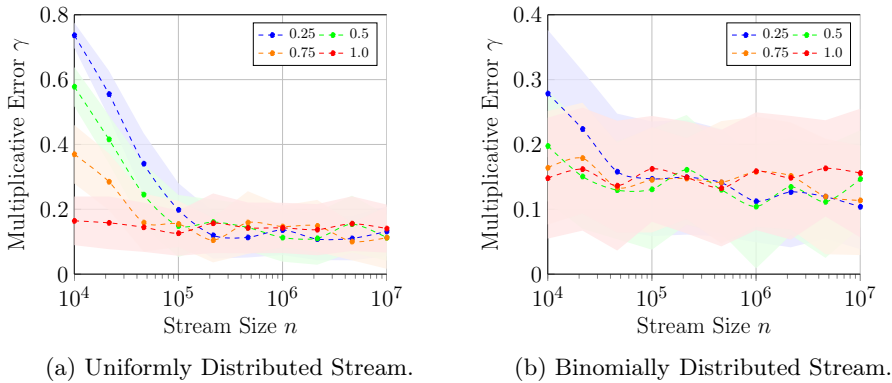

As we would like to empirically understand $\mathbb{F}_p$ sketch's trade-off between space, error and privacy, we evaluate $\mathbb{F}_p$ with $p\in\{0.25, 0.5, 0.75, 1\}$ using synthetic streams of different sizes and distributions.
We also evaluate $\mathbb{F}_p$ with $p\in\{0.05, 0.1, \cdots, 0.95, 1\}$ on real-world data.
All the experiments were run on a Ubuntu18.04 LTS server with 32 AMD Opteron(TM) Processor 6212 with 512GB RAM.
%
%
%
%


\paragraph{Synthetic Data.}
We first evaluate $\mathbb{F}_p$ sketches using synthetic data.
We synthesize two kinds of data: the key domain is either uniformly or binomially distributed.
The value domain is \{1\} by default.
The size of the key domain is $1000$.

\paragraph{Real-world Data.}
We also evaluate $\mathbb{F}_p$ sketches using real-world application usage data~\citep{ye2019privkv} collected by TalkingData SDK.
There are more than 30 million events in this dataset, each representing one access to the TalkingData SDK.
We view the event type as the key and the value is set to 1 by default.

%

\subsection{Evaluation Results}

In this section, we present the evaluation results.
To avoid the influence of outliers, we report the median and interquartile of 100 runs for each data point except for the real-data evaluation.
For all the evaluation, the sketch size $r$ is 50 as suggested in~\citet{li2008estimators}.
The sub-sampling rate in all the experiments is $0.02$.


\paragraph{Synthetic Data.}

The evaluation results on synthetic data are presented in Figure~\ref{fig:synthetic}.
For uniformly distributed data, we observe that as the stream size increases, the multiplicative error decreases.
We conjecture the reason to be the effect of sub-sampling.
Concretely, each bin in the value domain has to get enough samples to approximate the behavior of the true distribution.
On the other hand, when the data is binomially distributed, the multiplicative error is relatively stable with small fluctuation.
We conjecture the reason is that as binomial distribution is more concentrated, the sample complexity is smaller than uniform distribution.
Besides, for uniformly distributed data, $p$s close to $0$ have relatively large errors while the errors when $p$ is close to 1 are small.
The reason is that the further $p$ is from $1$, the larger the influence of sub-sampling.
%
%

\begin{figure}[h]
    \centering
    \begin{minipage}{0.5\textwidth}
        \centering
        \resizebox{0.9\textwidth}{!}{\begin{tikzpicture}

\pgfplotsset{
    grid=major, 
    compat=1.10,
    xmin=0, xmax=1,
    xtick align=outside, xtick pos=left,
    xlabel= {$p$},
    legend pos=north east,
    legend style={font=\small},
    legend columns=2, 
}


\begin{axis}[
    ymin=0, ymax=1, axis y line*=left, 
    ylabel={Multiplicative Error $\gamma$},
]

\addplot+[
  blue, mark options={scale=0.75}, mark=*, dashed,
  smooth, 
  error bars/.cd, 
    y fixed,
    y dir=both, 
    y explicit
] table [x=x, y=y, col sep=comma] {data/utility/app.txt};
\addlegendentry{0.25}

\addplot [name path=upper,draw=none,forget plot] table[x=x, y expr=\thisrow{y}+\thisrow{err}/2,col sep=comma] {data/utility/app.txt};
\addplot [name path=lower,draw=none,forget plot] table[x=x, y expr=\thisrow{y}-\thisrow{err}/2,col sep=comma] {data/utility/app.txt};
\addplot [fill=blue!10,fill opacity=0.8,forget plot] fill between[of=upper and lower];

\legend{};

\end{axis}

\end{tikzpicture}}
        \caption{Results on Real-world Data.}
        \label{fig:app}
    \end{minipage}~~
    \begin{minipage}{0.5\textwidth}
        \centering
        \resizebox{0.9\textwidth}{!}{\begin{tikzpicture}
\begin{axis}[
  grid=major,
  ymin=-0.05, ymax=2, xmin=1, xmax=1000,
  xmode=log, log basis x={10},
  ytick align=outside, ytick pos=left,
  xtick align=outside, xtick pos=left,
  xlabel=$p$,
  ylabel={$\epsilon$},
  legend pos=north east,
  legend columns=2,
]

\addplot+[
  mark options={scale=0.75}, 
  mark=none,
  smooth,
  error bars/.cd, 
    y fixed,
    y dir=both, 
    y explicit
] table [x=x, y=y, col sep=comma] {data/ratio/1_1.txt};
\addlegendentry{$p=1.1$}








\addplot+[
  mark options={scale=0.75}, 
  mark=none,
  smooth,
  error bars/.cd, 
    y fixed,
    y dir=both, 
    y explicit
] table [x=x, y=y, col sep=comma] {data/ratio/1_9.txt};
\addlegendentry{$p=1.9$}

\addplot+[
  mark options={scale=0.75}, 
  mark=none,
  smooth,
  error bars/.cd, 
    y fixed,
    y dir=both, 
    y explicit
] table [x=x, y=y, col sep=comma] {data/ratio/1_95.txt};
\addlegendentry{$p=1.95$}

\addplot+[
  mark options={scale=0.75}, 
  mark=none,
  smooth,
  error bars/.cd, 
    y fixed,
    y dir=both, 
    y explicit
] table [x=x, y=y, col sep=comma] {data/ratio/1_99.txt};
\addlegendentry{$p=1.99$}

\end{axis}
\end{tikzpicture}}
        \caption{The curves of $\frac{\mathcal{D}_{p, 1}(x)}{\mathcal{D}_{p, 2^p}(x)}$ with different values of $p\in(1, 2)$ on $\mathbb{R}^+$. The negative half is symmetric. The x-axis is log-scale to highlight the complex monotone trends.}
        \label{fig:ratio_2}
    \end{minipage}
\end{figure}

\paragraph{Real-world Data.}

The evaluation results for real-world data are presented in Figure~\ref{fig:app}.
We sampled 100,0000 data points from the dataset and the key has a domain of size 1488095.
Each data point is the median of 5 runs.
We observe that the further $p$ is from $1$, the higher the multiplicative error.
This conforms with our observation in the evaluation on synthetic data.
%



\section{Conclusion \& Future Work}
\label{sec:conclusion}

This paper takes an important step towards narrowing the gap of space complexity between private and non-private frequency moments estimation algorithms.
We prove that $\mathbb{F}_p$ is differentially private as is when $p\in(0, 1]$ and thus give the first differentially private frequency estimation protocol with polylogarithmic space complexity.
%

At the same time, we observe several open challenges.
First, the proof does not easily extend to $p\in(1, 2)$.
Figure~\ref{fig:ratio_2} exhibits the complexity of monocity of $\frac{\mathcal{D}_{p, 1}(x)}{\mathcal{D}_{p, 2^p}(x)}$ when $p\in(1, 2)$.
The most complex curve when $p=1.99$ is composed of three monotonic parts in the figure.
Hence, an interesting next step is to fully understand the monotonicity pattern of the ratio curve when $p\in(1, 2)$ and get corresponding privacy parameters.
Second, the space complexity of Algorithm~\ref{alg:sketch_subsample} is still worse than the optimal non-private algorithm by a factor of $\log(m)$.
It is interesting to check whether the optimal algorithm~\citep{kane2011fast} also preserves differential privacy.

\ificlr
\newpage
\bibliographystyle{iclr2022_conference}
\else
\bibliographystyle{plain}
\fi
\bibliography{ref}

\end{document}